\tikzstyle{vertex}=[circle, draw, inner sep=0pt, minimum size=12pt]
\newcommand{\vertex}{\node[vertex]}
\pgfplotsset{compat=1.18}
\def\subparagraph{}
\titlespacing*{\section}{0pt}{2ex plus 1.5ex minus 0.2ex}{0.5ex plus 0.2ex}
\titlespacing*{\subsection}{0pt}{2ex plus 1ex minus 0.2ex}{0.5ex plus 0.2ex}
\newtheorem{lemma}{Lemma}
\newtheorem{definition}{Definition}
\newtheorem{theorem}{Theorem}
\newtheorem{assumption}{Assumption}
\newtheorem{remark}{Remark}
\newcommand{\R}{\ensuremath{\mathbb{R}}}	
\newcommand{\diag}{\operatorname{diag}}	
\newcommand{\sym}{\operatorname{sym}}
\newcommand{\col}{\operatorname{col}}
\newcommand{\rank}{\operatorname{rank}}
\def\T{\top}
\def\bmat{\begin{bmatrix}}
\def\emat{\end{bmatrix}}
\def\colsep{\setlength\arraycolsep}
\title{Distributed Prescribed-Time Observer for Nonlinear Systems in Block-Triangular Form}
\author{Vincent de Heij, M. Umar B. Niazi, Karl H. Johansson, and Saeed Ahmed
\thanks{
V.~de~Heij and Saeed Ahmed are with the Jan C. Willems Center for Systems and Control and the Engineering and Technology Institute Groningen (ENTEG), Faculty of Science and Engineering, University of Groningen, 9747~AG Groningen, The Netherlands. (Emails: v.de.heij@student.rug.nl, s.ahmed@rug.nl).}
\thanks{K.~H.~Johansson is with the Division of Decision and Control Systems, Digital Futures, KTH Royal Institute of Technology, SE-100 44 Stockholm, Sweden. (Email: kallej@kth.se)}
\thanks{M.~U.~B.~Niazi is with the Division of Decision and Control Systems, Digital Futures, KTH Royal Institute of Technology, SE-100 44 Stockholm, Sweden, and with the Laboratory for Information and Decision Systems, Department of Electrical Engineering and Computer Science, Massachusetts Institute of Technology, Cambridge, MA 02139, USA.
(Email: mubniazi@kth.se, niazi@mit.edu).}
\thanks{M.~U.~B.~Niazi is supported by the European Union’s Horizon Research and Innovation Programme under Marie Sk{\l}odowska-Curie grant agreement No. 101062523.}
}
\begin{document}

\maketitle

\begin{abstract}
This paper proposes a distributed prescribed-time observer for nonlinear systems representable in a block-triangular observable canonical form. Using a weighted average of neighbor estimates exchanged over a strongly connected digraph, each observer estimates the system state despite the limited observability of local sensor measurements. The proposed design guarantees that distributed state estimation errors converge to zero at a user-specified convergence time, irrespective of observers' initial conditions. To achieve this prescribed-time convergence, distributed observers implement time-varying local output injection gains that monotonically increase and approach infinity at the prescribed time. The theoretical convergence is rigorously proven and validated through numerical simulations, where some implementation issues due to increasing gains have also been clarified.
\end{abstract}

\begin{IEEEkeywords}
    Distributed observers, sensor networks, prescribed-time state estimation, nonlinear systems.
\end{IEEEkeywords}

\section{Introduction}
Distributed state estimation involves estimating the state of a dynamical system through a network of spatially distributed observers, where the sensor of each observer can measure only a partial state. Due to the limited observability of local sensors, individual observers cannot independently reconstruct the system state. Therefore, observers employ information exchanged by their neighbors over a communication network to achieve full state estimation, as illustrated in Fig.~\ref{fig:disobsnetwork}. Such distributed approaches have emerged as a promising alternative to centralized methods for state estimation, offering significant advantages in scalability, resilience, privacy, and communication efficiency \cite{Tajer2010}.

For linear systems, seminal papers \cite{Olfati2007} and \cite{Khan2008} studied distributed Kalman filtering problem, motivating developments of distributed Luenberger observers \cite{Kim2016, Park2017, Wang2018, Trentelman19} in deterministic settings. Extensions to finite-time distributed observers, such as \cite{Silm2020,Ortega2021}, aim to estimate the state of a linear system up to a certain accuracy within a finite time, which depends on the system's initial condition. A more recent approach in \cite{Ge2023} proposes a kernel-based method for distributed state estimation to achieve fixed-time convergence of the state estimate for a linear system, where the convergence time is independent of the system's initial condition.

Distributed observers for nonlinear systems represent a significant advancement in the field. A distributed Luenberger-like observer asymptotically estimating the state of autonomous nonlinear systems is proposed in \cite{Battilotti2019}. For controlled nonlinear systems, \cite{Yuanqing22} proposes a distributed observer by adapting methods from \cite{Kim2016} and \cite{Trentelman19}. 
However, these methods ensure only asymptotic convergence of the estimate to the system state.

 \begin{figure}[t]
      \centering
      \includegraphics[width=0.95\linewidth]{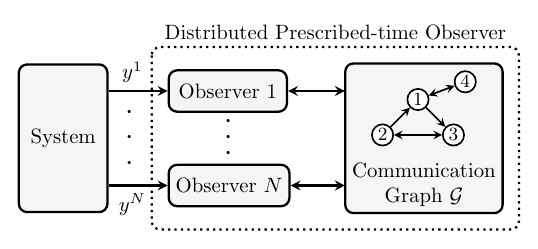} 
      \caption{Framework of distributed state estimation.}
      \label{fig:disobsnetwork}
  \end{figure}


In this paper, we are interested in distributed observers that can estimate the state within finite time, which is prescribed by the user and is independent of the system's initial condition. This is because in practical applications, especially those with uncertain initial conditions, achieving an accurate state estimate within a prescribed time is critical to meet control objectives. Existing methods for finite-time state estimation often have convergence times that depend on the system's initial conditions, which may be unavailable, inaccurate, or confidential in practical applications; whereas methods for fixed-time state estimation have convergence times that depend on observer parameters (see \cite{Ge2023}) and cannot be chosen arbitrarily by the user. On the other hand, the prescribed-time state estimation offers a compelling alternative, as it allows the user to \textit{a priori} specify the convergence time independently of the initial condition.




Centralized prescribed-time observers for linear and nonlinear systems in triangular forms are proposed in \cite{Holloway19} and \cite{Adil2024}, respectively, which employ time-varying gains that approach infinity at the prescribed time. To the best of our knowledge, the problem of distributed prescribed-time state estimation of nonlinear systems remains unresolved in current literature. While \cite{Chang2021} attempted to address this problem, the paper contains critical mathematical errors as pointed out in Appendix~\ref{append:mistakes-chang2021}.




In this paper, we address the distributed prescribed-time state estimation problem for nonlinear systems in a block-triangular observable canonical form \cite{Gauthier2001, Shim2001, Hammouri2010}.
Systems in this form can be obtained by transforming nonlinear systems via a specific diffeomorphic map subject to the differential observability condition \cite{Shim2001, gauthier1992}.
Our proposed distributed observer allows users to specify an arbitrary convergence time for state estimation, independent of initial conditions. Because of the limited observability of local measurements, observers communicate over a strongly connected directed graph to estimate locally observable states through output injection and unobservable states through consensus with their neighbors. Using time-varying gains that approach infinity near convergence time and synthesizing observer gains via linear matrix inequalities, we prove convergence of the distributed state estimation errors to zero within the prescribed time. We validate our results through numerical simulations and also provide guidelines on the implementation issues due to increasing time-varying gains.

\section{Preliminaries}
\label{sec:prelim}
\textit{Notations:}
Given $A\in\mathbb{R}^{n\times n}$, $\sym(A):= A+A^\T$. The Kronecker product between $A$ and $B$ is given by $A\otimes B$. The identity matrix of dimensions $n\times n$ is $I_n$, a vector of ones is $\mathbf{1}_n\in\mathbb R^{n\times 1}$, and a zero matrix is $0_{m\times n}\in\mathbb R^{m\times n}$. A symmetric matrix $P=P^\T\in\mathbb R^{n\times n}$ is denoted as $P\geq 0$ (resp., $P\leq 0$) if it is positive (resp., negative) semi-definite.
For matrices $A_1,\dots,A_N$, $\underline A = \diag(A_1,\ldots,A_N)$ denotes a block diagonal matrix with $A_i$'s as diagonal blocks.
A block diagonal matrix comprising $N$ copies of $A$ is denoted by $\overline A = I_N\otimes A$. 
The vertical concatenation of $v_1\in\mathbb R^{n_1},\ldots,v_N\in\mathbb R^{n_N}$ into a single column vector is denoted by $\col(v_1,\dots,v_N)$.


\textit{Graph theory:}
A weighted directed graph is denoted by $\mathcal{G} = (\mathcal{N}, \mathcal{E}, \mathcal{A})$ with $\mathcal{N} = \{1, 2, \ldots, N\}$ the set of nodes, $\mathcal{E} \subset \mathcal{N} \times \mathcal{N}$ the set of edges, and $\mathcal{A} = [a_{ij}] \in \mathbb{R}^{N \times N}$ the weighted adjacency matrix. The edge $(i,j)\in\mathcal E$ indicates that node~$i$ receives information from node~$j$. Each entry $a_{ij}$ in $\mathcal{A}$ denotes the weight associated with the edge $(i,j)$, with $a_{ij}>0$ if and only if $(i,j) \in \mathcal{E}$; otherwise, $a_{ij} = 0$. 
A directed path from node~$j$ to node~$i$ consists of a sequence of edges $(i, i_{1}),\dots,(i_k,j)$, for some integer $k\geq 1$. 
A directed graph $\mathcal{G}$ is called \textit{strongly connected} if, for any $i,j\in\mathcal N$, there exists a directed path from $j$ to $i$ in $\mathcal E$. 

Given $\mathcal G$ is strongly connected, then (see \cite{Trentelman19, Wenwu2010, Olfati2004, Wei2005}):

\begin{enumerate}[{Fact} 1:]
    \item The Laplacian matrix $\mathcal{L}\in\mathbb R^{N\times N}$ of $\mathcal G$ has a zero eigenvalue with the corresponding eigenvector $\mathbf{1}_N$ and all other eigenvalues of $\mathcal{L}$ lie in $\mathbb C_{>0}$.
    \label{fact1}
    \item There exists a unique $r = \bmat r_1 & \ldots & r_N\emat\in\mathbb R_{>0}^{1\times N}$ such that $r \mathcal{L} = 0_{1\times N}$ and $r\textbf{1}_N = N$. \label{lem:stronggraphs}
    \item $\hat {\mathcal{L}}\coloneqq R\mathcal{L}+\mathcal{L}^\T R$ is positive semi-definite, where
    \begin{equation} \label{eq:R-matrix}
        R = \diag(r_1,\ldots,r_N).
    \end{equation}
    Moreover, $\textbf{1}^{\T}_N \hat {\mathcal{L}} = 0$ and $\hat {\mathcal{L}} \textbf{1}_N = 0.$
    \label{fact3}
\end{enumerate}

\section{System Definition}
\label{sec:sys-def}
Consider a nonlinear system
\begin{align}
\begin{aligned}\label{eq:sys}
        \dot x(t) &= Ax(t) + B\varphi(x(t))\\
        y^i(t) &= H^i x(t), \quad i=1,\dots,N
\end{aligned}
\end{align}
where $x(t)\in\mathbb R^n$ is the state, $y^i(t)\in\mathbb R$ is sensor~$i$'s measurement, and $\varphi:\mathbb R^n\to\mathbb R^N$ is a known function.
Specifically, the system comprises $N\leq n$ subsystems, where the state of subsystem~$i$ is $x_i(t)\in\mathbb R^{n_i}$ with $\sum_{i=1}^N n_i = n$. Therefore, $x(t)= \col(x_1(t), \ldots, x_N(t))$ and
\begin{equation*}
    x_i(t) = \col(x_{i1}(t),\ldots,x_{i,n_i}(t)), \quad i=1,\ldots,N.
\end{equation*}
The nonlinearity $\varphi$ is triangular, meaning
\begin{equation}
\label{eq:var-phi}
\varphi(x)= \col(\varphi_1(x_1),\varphi_2(x_1,x_2),\ldots,\varphi_N(x_1\ldots,x_N))
\end{equation}
where $\varphi_i:\mathbb R^{\bar n_i}\to\mathbb R$ with $\bar n_i=\sum_{j=1}^i n_j$.

Let $H = \col(H^{1},\ldots,H^{N})$, where $H^i \in \mathbb{R}^{1 \times n}$ given by
\begin{equation*}
H^i=\colsep{2pt}\begin{bmatrix}
    0_{1\times n_1} & \dots & 0_{1\times n_{i-1}} & H_i & 0_{1\times n_{i+1}} & \dots & 0_{1\times n_N}
\end{bmatrix}
\end{equation*}
with $H_i\in\mathbb R^{1\times n_i}$ placed on the $i$-th block position. Notice that 
$y^i(t) = H^i x(t) = H_i x_i(t)$. The system matrices ${A\in\mathbb{R}^{n\times n}}$, $B\in\mathbb{R}^{n\times N}$, and $H\in\mathbb{R}^{N\times n}$ are given by
\begin{align*}
    A& = \diag(A_1,\ldots,A_N), \quad B = \diag(B_1,\ldots,B_N), \\
    H& = \diag(H_1, \ldots, H_N)
\end{align*}
with $A_i \in \mathbb{R}^{n_i \times n_i}$, $B_i \in \mathbb{R}^{n_i \times 1}$, and $H_i \in \mathbb{R}^{1 \times n_i}$ in observable canonical form
\begin{align*}
 A_i &=
    \begin{bmatrix}
        0_{(n_i-1)\times 1}  & I_{n_i-1}\\
        0 & 0_{1\times (n_i-1)}
    \end{bmatrix}, \quad B_i =
    \begin{bmatrix}
    0_{(n_i-1 )\times 1}\\
    1
    \end{bmatrix}, \\
    H_i &=
    \colsep{10pt}\begin{bmatrix}
        1 & 0_{1\times(n_i-1)}
    \end{bmatrix}.
\end{align*}

\begin{assumption}\label{as:1}
The function $\varphi_i:\mathbb R^{\bar n_i}\to\mathbb R$, for $i=1,\dots,N$, in \eqref{eq:var-phi} is Lipschitz continuous, i.e., there exists $\gamma_i\in\mathbb R_{\geq 0}$ such that for all $\bar x_i,\bar \omega_i \in \mathbb{R}^{\bar n_i}$, it holds that
\begin{equation}\label{eq:Lipschitz}
    \bigl|\varphi_i(\bar x_i + \bar\omega_i) - \varphi_i(\bar x_i)\bigr| \leq \gamma_i\|\bar\omega_i\|
\end{equation}
where $\bar x_i \coloneqq \col(x_1,\dots,x_i)$ and $\bar \omega_i \coloneqq \col(\omega_1,\dots,\omega_i)$. 
\end{assumption}

The communication network between the sensors is defined by a weighted directed graph $\mathcal G=(\mathcal N,\mathcal E,\mathcal A)$.

\begin{assumption} \label{as:strongcon}
    $\mathcal{G}$ is strongly connected. 
\end{assumption}

We further assume that the linear part of the system is jointly observable. However, it must be noted that the class of systems defined above becomes unobservable if any of the sensors are removed. Therefore, all measurements $y^i(t)$, for $i=1,\dots,N$, must be utilized for state estimation (cf. \cite{Yuanqing22}). 


\begin{assumption} \label{as:jointobs}
   $\rank[H^\T ~ (HA)^\T ~ \dots ~ (HA^{n-1})^\T] = n$.
\end{assumption}

Subject to the assumptions above, our objective is to design a distributed estimation algorithm with $N$ observers to cooperatively determine the full state of $x(t)$ of system \eqref{eq:sys} within a finite time $T>0$, which is prescribed by the user arbitrarily and independent of the initial conditions.

\section{Design Problem}

The distributed state estimation algorithm comprises $N$ observers, where each observer~$i$ can access only the measurement of sensor~$i$ and a weighted combination of state estimates of its neighboring observers according to the communication network $\mathcal G$.
Let $z^i(t)\in\mathbb R^n$ denote the state estimate of observer~$i$.
Each observer~$i$ estimates the locally observable part of the state, $x_i(t)$, by using the local output injection $y^i(t)-H^iz^i(t)$. 
The unobservable part of the state, $\col(x_1(t),\dots,x_{i-1}(t),x_{i+1}(t),\dots,x_N(t))$, is estimated by seeking consensus with the neighboring observers, i.e., $\sum_{j} a_{ij} [z^i(t)-z^j(t)]$. 
In particular, we consider the following architecture of the distributed observer
\begin{multline}\label{eq:localobserver}
    \dot z^i(t) = A z^i(t) + B\varphi(z^{i}(t)) +{\Gamma}(t) L^i[y^i(t) - H^iz^i(t)]\\
    - \mu^{1+m}(t)r_i   \sum^{N}_{j=1} a_{ij}[z^{i}(t)-z^j(t))], ~ i=1,\dots,N
\end{multline}
where $z^i(t) \in \mathbb{R}^n$ denotes the full-state estimate of observer~$i$. 
The entry $a_{ij}$ is the $(i,j)$-th entry of the adjacency matrix $\mathcal{A}$ of $\mathcal{G}$ and $r_i$ is defined in Fact~\ref{lem:stronggraphs} in Section~\ref{sec:prelim}. 
The distributed observer design problem aims to find gains $\Gamma(t)\in\mathbb R_{>0}^{n\times n}$, $\mu(t)\in\mathbb R_{>0}$, and $L^i\in\mathbb R^n$ such that a certain stability property (to be defined later) of the state estimation error holds.

\subsection{Coordinate Transformation and Time-Varying Gains}\label{sub:designapproach}
The state estimation error of observer~$i$ in \eqref{eq:localobserver} is defined as 
\[
e^i(t) = x(t) - z^i(t), \quad i=1,\dots,N.
\]
Consider a time-varying coordinate transformation
\[
\zeta^{i}(t) = \Gamma^{-1}(t) e^{i}(t)
\]
where $\Gamma(t)\in\mathbb{R}_{>0}^{n\times n}$ is expressed as
\begin{equation}\label{eq:Gamma}
\Gamma(t) \coloneqq \Gamma(\mu(t)) = \diag\left(
\Gamma_1(\mu(t)),\dots, \Gamma_N(\mu(t))
\right)
\end{equation}
with $\Gamma_i(\mu(t)) \in \mathbb{R}^{n_i \times n_i}$, associated with subsystem~$i$,
\begin{equation} \label{eq:Gamma-i}
    \Gamma_i(t) \coloneqq \Gamma_i(\mu(t)) = \diag\left(
     \mu^{1+m}(t),\ldots,\mu^{n_i(1+m)}(t)
     \right).
\end{equation}
Here, $m\geq 1$ and $\mu:[t_0,t_0+T) \to \mathbb R_{>0}$ is defined as
\begin{equation}\label{eq:mu}
        \mu(t)\coloneqq\mu(t;t_0,T)= \frac{T}{T+t_0-t}
\end{equation}
for some initial time $t_0\in\mathbb R_{\geq 0}$ and prescribed-time $T\in\mathbb R_{>0}$. Notice that $\mu$ is parametrized by $t_0$ and $T$.
In \eqref{eq:Gamma-i}, $m$ is a design parameter that controls the convergence speed, and $\mu(t)\in[1,\infty)$ is a monotonically increasing function approaching infinity as $t\to t_0 + T$. Consequently, its multiplicative inverse $1/\mu(t)\in(0,1]$ decreases monotonically to $0$ as $t\to t_0+T$. 

\subsection{Prescribed-Time Convergence of Estimation Error}
Let the vector of the state estimates of $N$ observers be
$
z(t) = \col(z^{1}(t),\ldots,z^N(t))
$
and the corresponding distributed state estimation error
\begin{equation}\label{eq:distributederror}
 e (t)= [\mathbf{1}_N \otimes x(t)] -z(t). 
\end{equation}
By taking the derivative of \eqref{eq:distributederror} and combining \eqref{eq:sys} and \eqref{eq:localobserver}, we obtain the error system as
\begin{equation}\label{eq:errorthm}
    \dot e(t) = \Lambda(t) e(t) + \overline B \Delta \Phi(t) - \mu^{1+m}(t) (R\mathcal{L}\otimes I_n) e(t)
\end{equation}
where $\overline B  = I_N \otimes B$, $\mathcal L$ is the Laplacian matrix, $R$ in \eqref{eq:R-matrix}, and
\begin{align*}
    \Delta \Phi(t) \!&=\col\!\left(\varphi(x(t))- \varphi(z^1(t)),\ldots,\varphi(x(t))- \varphi(z^N(t))\right), \\
    \Lambda(t) &= \diag(A-\Gamma(t) L^1H^1,\ldots,A-\Gamma(t) L^NH^N).
\end{align*}
Now, consider the transformed error
\begin{equation}\label{eq:distributederrortransformation}
    \zeta(t) = \overline{\Gamma}^{-1}(t) e(t)
\end{equation}
where $\overline{\Gamma}(t) \coloneqq I_N \otimes \Gamma(t)$. 
Taking its derivative yields
\begin{multline*}
        \dot \zeta(t) = \dot {\overline \Gamma}^{-1}(t)e(t) + {\overline \Gamma^{-1}(t) }\Lambda e(t) +{ \overline\Gamma^{-1}(t) \overline B} \Delta \Phi(t) \\
        -\mu^{1+m}(t) { \overline \Gamma^{-1}}(t)(R\mathcal{L}\otimes I_n)e(t).
\end{multline*}
Let $D_i = \diag(1,2,\ldots,n_i)$ and
\begin{equation}\label{eq:matrices-ol-ul}
    \arraycolsep=2pt\begin{array}{ll}
    \overline A \!=\! I_N \otimes A, &
    \overline D \!=\! I_N \otimes \diag(D_1,\ldots,D_N), \\
    \underline L \!=\! \diag(L^{1},\ldots,L^{N}), &
    \underline H \!=\! \diag(H^{1},\ldots,H^{N}).
    \end{array}
\end{equation}
Since $e(t) = {\overline{\Gamma}}(t) \zeta(t)$, and using the facts 
\begin{align*}
    & \dot{\overline\Gamma}^{-1}(t) = -\overline{\Gamma}^{-2}(t) \dot{\overline \Gamma}(t) = (1+m)\frac{\mu(t)}{T}\overline D\, \overline \Gamma^{-1}(t), \\
    & \overline\Gamma^{-1}(t) \overline A\, \overline\Gamma(t) = \mu^{1+m}(t) \overline A
\end{align*}
we obtain
\begin{multline}
    \label{eq:transformederror}
    \dot \zeta(t) =\mu^{1+m}(t)(\overline{A}-\underline{LH}- (R\mathcal{L}\otimes I_n))\zeta(t)\\
    \quad-(1+m)\frac{ \mu(t)}{T} \overline{D}\zeta(t)+{\overline \Gamma^{-1}}(t) \overline B \Delta\Phi(t).
\end{multline}



\begin{definition}[FGUAS \cite{Holloway19}]
  \label{def:FTGAS1}
A system $\dot e = f(e,t)$ is \emph{fixed-time, globally uniformly asymptotically stable} over an interval $[t_0,t_0+T)$ with $T\in\mathbb R_{>0}$ if there exists a class $\mathcal{KL}$ function\footnote{A function $\beta:\mathbb R_{\geq 0}\times\mathbb R_{\geq 0}\to\mathbb R_{\geq 0}$ is of class $\mathcal{KL}$ if (i)~for each fixed $s\in\mathbb R_{\geq 0}$, $\beta(\cdot,s)$ is continuous, zero at zero, and strictly increasing; and (ii)~for each fixed $r\in\mathbb R_{\geq 0}$, $\beta(r,s)\to 0$ as $s\to\infty$.} $\beta:\mathbb R_{\geq 0}\times\mathbb R_{\geq 0}\to\mathbb R_{\geq 0}$ such that
\begin{equation}
    \label{eq:KL-FTGUAS}
    \| e(t) \| \leq \beta (\|e({t_0})\|,\mu(t;t_0,T)-1), ~\forall t\in [t_0,t_0+T)
\end{equation}
where $\mu(t;t_0,T)$ is defined in \eqref{eq:mu}.
\end{definition}


Similar to \cite{Holloway19}, we employ the definition of fixed-time stability, which guarantees convergence within finite time $T$ that is independent of initial conditions. 
Prescribed-time observers build upon this notion but via specific time-varying gain \eqref{eq:Gamma} transform the fixed-time property into prescribed-time behavior--- i.e., for any arbitrary $T$ prescribed by the user, convergence is ensured at $T$ rather than just before some constant upper bound.
Therefore, when a user can arbitrarily select the fixed time $T$ and the system continues to satisfy the FGUAS property, we call the system to be \textit{prescribed-time, globally uniformly asymptotically stable}.

Our goal is to reconstruct the state before time $t_0+T$. To this end, it is restrictive to satisfy \eqref{eq:KL-FTGUAS} over the whole interval $[t_0,t_0+T)$. Therefore, inspired by the eventual stability notion \cite{lasalle1963}, we can relax Definition~\ref{def:FTGAS1} as follows.

\begin{definition}[E-FGUAS]
\label{def:eFTGUAS}
A system $\dot e = f(e,t)$ is \emph{eventually fixed-time, globally uniformly asymptotically stable} over an interval $[t_0,t_0+T)$ if there exists $t_0\leq t_*<t_0+T$ such that $\|e(t)\|<\infty$, for all $t\in[t_0,t_*)$, and \eqref{eq:KL-FTGUAS} holds for all $t\in[t_*,t_0+T)$.
\end{definition}

Using Definition~\ref{def:eFTGUAS}, we aim to develop a distributed observer design criterion to ensure that error system \eqref{eq:errorthm} converges to zero as $t\to t_0+T$.
Since the time $T$ is prescribed, one can substitute ``prescribed-time'' for ``fixed-time'' in the E-FGUAS property.
However, showing stability within time $T$ for time-varying systems \eqref{eq:errorthm} and \eqref{eq:transformederror} constitutes a stability analysis problem rather than a stabilization problem.
Consequently, similar to \cite{Holloway19}, we utilize fixed-time stability terminology and avoid using ``prescribed-time stability'' in our analysis, though it should be understood implicitly that an observer corresponding to an E-FGUAS error system achieves state reconstruction within the prescribed time $T$.


\section{Main Result}

This section provides a design criterion for the distributed prescribed-time observer \eqref{eq:localobserver} that achieves the E-FGUAS property of error dynamics \eqref{eq:errorthm}.
Given the time-varying gains $\Gamma(t)$ and $\mu(t)$ in \eqref{eq:Gamma} and \eqref{eq:mu}, one can synthesize local output injection gains $L^i$, for $i=1,\dots,N$, using this criterion.



\begin{theorem}\label{thm:newthm}
Given an initial time $t_0\in\mathbb R_{\geq 0}$, the prescribed-time $T\in\mathbb R_{>0}$, and $\mu(t;t_0,T)$ as in \eqref{eq:mu}, suppose there exist symmetric positive definite matrices $P_1,\dots,P_N\in\mathbb R^{n\times n}$, row vectors $Q_1,\dots,Q_N\in\mathbb R^{1\times n}$, scalars $\epsilon_1,\epsilon_2,\epsilon_3\in\mathbb R_{>0}$, and $\mu_\ast \coloneq \mu(t_\ast)$, for some $t_\ast\in[t_0,t_0+T)$, such that
\begin{subequations}\label{eq:LMISnew}
    \begin{align}
        \label{eq:LMI-a}
        & \sym\left(P\overline A-Q^{\T}\underline H - P (R\mathcal{L}\otimes I_n)\right)+ \epsilon_1 I_{nN} \leq 0 \\
        \label{eq:LMI-b}
        & \sym(P \overline D) - \epsilon_2 I_{nN} \geq 0 \\
        \label{eq:LMI-c}
        & P-\epsilon_3 I_{nN} \leq 0\\
        \label{eq:LMI-d}
        & \epsilon_2 -\frac{2\epsilon_3k_fT}{\mu_\ast(1+m)} \geq 0
    \end{align}
\end{subequations}
where $P=\diag(P_1,\dots,P_N)$, $Q=\diag(Q_1,\dots,Q_N)$, $R$ is given in \eqref{eq:R-matrix}, $k_f \!=\! \sqrt{N \sum_{i=1}^N (n_i\gamma_i)^2}$ with $\gamma_i$ in \eqref{eq:Lipschitz}, and $\overline A, \underline H, \underline L, \overline D$ in \eqref{eq:matrices-ol-ul}.
Then, the error system \eqref{eq:errorthm} is E-FGUAS over the interval $[t_0,t_0+T)$ with the observer gains given by
\begin{equation}\label{eq:Lsnew}
    L^{i} = P_i^{-1} Q_i^{\T}, \quad i=1,\dots,N.
\end{equation}
\end{theorem}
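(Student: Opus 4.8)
The plan is to construct a Lyapunov function for the transformed error dynamics \eqref{eq:transformederror} using the block-diagonal $P$ from the LMIs, show that along trajectories it satisfies a differential inequality of the form $\dot V \leq -c\,\tfrac{\mu(t)}{T}V$ for a positive constant $c$, and then integrate this inequality to recover the $\mathcal{KL}$ estimate required by Definition~\ref{def:FTGAS1}. Concretely, I would take $V(\zeta) = \zeta^\T (I_N \otimes P)\,\zeta$ — or, in the notation of the theorem, $V = \zeta^\T \mathbf{P}\zeta$ with $\mathbf{P}$ the appropriate block-diagonal arrangement built from $P$ — which is positive definite since each $P_i > 0$. Differentiating along \eqref{eq:transformederror} produces three groups of terms: the $\mu^{1+m}(t)$-scaled term from $\overline A - \underline{LH} - (R\mathcal L\otimes I_n)$, the $-(1+m)\tfrac{\mu(t)}{T}\overline D$ term, and the nonlinear term $\overline\Gamma^{-1}(t)\overline B\,\Delta\Phi(t)$.

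For the first group, substituting $Q = P\underline L$ (equivalently $L^i = P_i^{-1}Q_i^\T$ from \eqref{eq:Lsnew}) turns the quadratic form into $\mu^{1+m}(t)\,\zeta^\T\sym(P\overline A - Q^\T\underline H - P(R\mathcal L\otimes I_n))\zeta$, which by \eqref{eq:LMI-a} is bounded above by $-\epsilon_1\mu^{1+m}(t)\|\zeta\|^2 \leq 0$; I would simply discard this as a nonpositive contribution. For the second group, \eqref{eq:LMI-b} gives $\zeta^\T\sym(P\overline D)\zeta \geq \epsilon_2\|\zeta\|^2$, so this term contributes at most $-(1+m)\tfrac{\mu(t)}{T}\epsilon_2\|\zeta\|^2$. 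For the nonlinear term, I apply Cauchy–Schwarz and then Lemma~\ref{lem:key-ineq}: $2\zeta^\T P\,\overline\Gamma^{-1}(t)\overline B\Delta\Phi(t) \leq 2\|P\zeta\|\,\|\overline\Gamma^{-1}(t)\overline B\Delta\Phi(t)\| \leq 2\epsilon_3 k_f\|\zeta\|^2$, where \eqref{eq:LMI-c} supplies $\|P\zeta\| \leq \epsilon_3\|\zeta\|$ (using $P \leq \epsilon_3 I$ together with $P > 0$). Collecting, $\dot V \leq \left(-(1+m)\tfrac{\mu(t)}{T}\epsilon_2 + 2\epsilon_3 k_f\right)\|\zeta\|^2$. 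Since $\mu$ is monotonically increasing on $[t_\ast,t_0+T)$, for $t \geq t_\ast$ we have $\mu(t) \geq \mu_\ast$, hence $-(1+m)\tfrac{\mu(t)}{T}\epsilon_2 + 2\epsilon_3 k_f \leq (1+m)\tfrac{\mu(t)}{T}\left(-\epsilon_2 + \tfrac{2\epsilon_3 k_f T}{\mu_\ast(1+m)}\right) \leq 0$ by \eqref{eq:LMI-d}; in fact one gets the strict bound $\dot V \leq -(1+m)\tfrac{\mu(t)}{T}\left(\epsilon_2 - \tfrac{2\epsilon_3 k_f T}{\mu(t)(1+m)}\right)\|\zeta\|^2$, and to extract exponential-type decay I would instead keep a fixed fraction: pick any $\delta \in (0,1)$ and note that once $\mu(t)$ exceeds $\mu_\ast/\delta$ — which happens on a subinterval — the bracket exceeds $(1-\delta)\epsilon_2$; more simply, since $\epsilon_2 - \tfrac{2\epsilon_3 k_f T}{\mu(t)(1+m)} \geq \epsilon_2 - \tfrac{2\epsilon_3 k_f T}{\mu_\ast(1+m)} \geq 0$ and is nondecreasing, I bound it below on $[t_\ast,t_0+T)$ and relate $\|\zeta\|^2$ to $V$ via $\lambda_{\min}(\mathbf P)\|\zeta\|^2 \leq V \leq \lambda_{\max}(\mathbf P)\|\zeta\|^2$.

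Integrating $\dot V \leq -\kappa\,\mu(t)\,V$ for the resulting constant $\kappa > 0$ (absorbing $(1+m)/T$ and the spectral bounds) over $[t_\ast,t]$ gives $V(t) \leq V(t_\ast)\exp\!\left(-\kappa\int_{t_\ast}^t \mu(s)\,ds\right)$. A direct computation shows $\int_{t_\ast}^t \mu(s;t_0,T)\,ds = T\ln\!\frac{T+t_0-t_\ast}{T+t_0-t} = T\ln\!\frac{\mu(t)}{\mu(t_\ast)}$, so $V(t) \leq V(t_\ast)\left(\mu(t)/\mu_\ast\right)^{-\kappa T}$, which tends to $0$ as $t \to t_0+T$ since $\mu(t)\to\infty$. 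Translating back through $\|\zeta(t)\| \leq \sqrt{V(t)/\lambda_{\min}(\mathbf P)}$ and $V(t_\ast) \leq \lambda_{\max}(\mathbf P)\|\zeta(t_\ast)\|^2$, and finally using $\zeta = \overline\Gamma^{-1}\zeta$ — rather, $e = \overline\Gamma\zeta$ with $\|\overline\Gamma^{-1}(t)\| \leq 1$ for $\mu(t)\geq 1$ and $\|\overline\Gamma(t_\ast)\|$ finite — yields a bound $\|e(t)\| \leq \beta(\|e(t_\ast)\|, \mu(t)-1)$ for a suitable class $\mathcal{KL}$ function $\beta$, which is exactly the FT-GUAS property on $[t_\ast, t_0+T)$. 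The main obstacle I anticipate is the bookkeeping at the junction between the differential inequality and the $\mathcal{KL}$ estimate: one must be careful that the coefficient multiplying $V$ stays bounded away from zero by a term proportional to $\mu(t)$ (so that its integral diverges), which is precisely what \eqref{eq:LMI-d} combined with monotonicity of $\mu$ secures; a secondary subtlety is correctly propagating the coordinate change $e \leftrightarrow \zeta$ so that the final bound is stated in terms of $\|e(t_\ast)\|$ and $\mu(t)-1$ as Definition~\ref{def:FTGAS1} demands, which requires the uniform bound $\|\overline\Gamma^{-1}(t)\|\le 1$ on $[t_\ast,t_0+T)$ and a fixed (finite) bound on $\|\overline\Gamma(t_\ast)\|$.
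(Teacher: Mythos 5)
Your setup coincides with the paper's: the same Lyapunov function $V(\zeta)=\zeta^\T P\zeta$ with $P=\diag(P_1,\dots,P_N)$, the same substitution $Q^\T=P\underline L$ turning \eqref{eq:LMI-a} into $-\epsilon_1\mu^{1+m}(t)\|\zeta\|^2$, and the same bound $2\zeta^\T P\,\overline\Gamma^{-1}(t)\overline B\Delta\Phi(t)\le 2k_f\lambda_{\max}(P)\|\zeta\|^2\le 2\epsilon_3k_f\|\zeta\|^2$ via Lemma~\ref{lem:key-ineq} and \eqref{eq:LMI-c}. The decisive divergence, and the gap, is that you \emph{discard} the term $-\epsilon_1\mu^{1+m}(t)\|\zeta\|^2$ and extract the decay from the $\overline D$-term, ending with $\dot V\le-\kappa\,\mu(t)V$ and hence only polynomial decay $V(t)\le V(t_\ast)\bigl(\mu(t)/\mu_\ast\bigr)^{-\kappa T}$. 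The paper does the opposite: \eqref{eq:LMI-b}--\eqref{eq:LMI-d} are used solely to cancel $+2k_f\lambda_{\max}(P)\|\zeta\|^2$ against $-\epsilon_2(1+m)\mu(t)/T\,\|\zeta\|^2$ on $[t_\ast,t_0+T)$, and the surviving inequality $\dot V\le-\tfrac{\epsilon_1}{\lambda_{\max}(P)}\mu^{1+m}(t)V$ integrates (via $\tfrac{d}{d\tau}\mu^m=\tfrac{m}{T}\mu^{1+m}$) to $\|\zeta(t)\|\le\Omega\|\zeta(t_\ast)\|\exp\bigl(-\alpha[\mu^m(t)-\mu_\ast^m]\bigr)$.

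This is not cosmetic. The theorem asserts FT-GUAS of $e(t)=\overline\Gamma(t)\zeta(t)$, and $\|\overline\Gamma(t)\|=\mu^{n_{\max}(1+m)}(t)\to\infty$ as $t\to t_0+T$, so the decay of $\|\zeta(t)\|$ must beat this polynomial blow-up. Your closing step gets the transformation backwards: $\|\overline\Gamma^{-1}(t)\|\le1$ gives $\|\zeta(t)\|\le\|e(t)\|$, which is useless for upper-bounding $\|e(t)\|$; what is needed is $\|e(t)\|\le\|\overline\Gamma(t)\|\,\|\zeta(t)\|$. The paper's bound $\exp(-\alpha\mu^m(t))$ dominates any power of $\mu$, so its route closes. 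Yours cannot: since $\sym(P\overline D)\ge\epsilon_2 I_{nN}$ forces $\epsilon_2\le 2n_{\max}\lambda_{\max}(P)$, your exponent obeys $\kappa T/2\le n_{\max}(1+m)$, and the resulting estimate $\|e(t)\|\le C\,\mu^{\,n_{\max}(1+m)-\kappa T/2}(t)$ never tends to zero. (Two smaller points: when \eqref{eq:LMI-d} holds with equality your constant lower bound on the bracket vanishes at $t_\ast$, which is repairable by integrating the exact time-varying coefficient as you sketch; and $V=\zeta^\T(I_N\otimes P)\zeta$ is dimensionally off, though you self-correct.) Retain the $-\epsilon_1\mu^{1+m}$ term as the source of decay and the remainder of your argument matches the paper's.
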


Before proving the theorem, we provide some remarks regarding the feasibility of the LMIs \eqref{eq:LMISnew}.
Firstly, the feasibility of \eqref{eq:LMI-a} is guaranteed by the detectability of the pair $(\underline H, \overline A-(R\mathcal L\otimes I_n))$, which holds when Assumption~\ref{as:strongcon} and Assumption~\ref{as:jointobs} are satisfied \cite{ugrinovskii2013}. 
This implies the existence of symmetric positive definite matrices $P_1,\dots,P_N$ satisfying \eqref{eq:LMI-a}. However, additional requirements \eqref{eq:LMI-b}--\eqref{eq:LMI-d} restrict the search space of feasible solutions $P$, but they are necessary for establishing the E-FGUAS property.

Due to the required prescribed-time convergence, the LMIs \eqref{eq:LMISnew} may become infeasible at $t_0$, where $\mu(t_0)=1$. Allowing a larger threshold $\mu_\ast \ge \mu(t_0)$ reduces conservatism in $\epsilon_2$ and ensures feasibility. Since $\mu(t)$ increases to $+\infty$ on $[t_0,t_0+T)$, it necessarily reaches $\mu_\ast$ at some $t_\ast \in [t_0,\,t_0+T)$, so the resulting design ensures the observer error converges to zero strictly before $t=t_0+T$.
This is the reason why it is generally impossible, as also observed in \cite{Holloway19}, to establish the FGUAS property of the error system \eqref{eq:errorthm} over the interval $[t_0,t_0+T)$. However, one can show that there exists $t_\ast\in[t_0,t_0+T)$ such that the system exhibits E-FGUAS property, ensuring the error $e(t)\to 0$ as $t\to t_0+T$. 

\begin{lemma} \label{lem:key-ineq}
    Let Assumption~\ref{as:1} hold. Then, 
    \begin{equation} \label{eq:k_f-ineq}
        \|\overline\Gamma^{-1}(t) \overline B \Delta\Phi(t)\| \leq k_f \|\zeta(t)\|
    \end{equation}
    where $k_f \!=\! \sqrt{\!N\! \sum_{i=1}^N \!(n_i\gamma_i\!)^2}$ with $\gamma_i$ in \eqref{eq:Lipschitz},  $\overline B$ and $\Delta\Phi(t)$ defined after \eqref{eq:errorthm}, and $\overline\Gamma^{-1}\!(t)\!\coloneqq\!\overline\Gamma^{-1}\!(\mu(t))\!$ defined after \eqref{eq:distributederrortransformation}.
\end{lemma}
\begin{proof}
    See Appendix~\ref{append:tech-lemmas}.
\end{proof}

\begin{proof}[Proof of Theorem~\ref{thm:newthm}]
We perform Lyapunov analysis to show the E-FGUAS property of the transformed error system \eqref{eq:transformederror} and then establish the E-FGUAS property of the original error system \eqref{eq:errorthm}. 
The proof is constructive in the sense that we derive an explicit upper bound of the distributed state estimation error that converges to zero as $t\to t_0+T$.

Let $P=\diag(P_1,\dots,P_N)$ with $P_i\in\mathbb R^{n\times n}$ some positive definite matrices. 
Consider a candidate Lyapunov function
    \begin{equation*}
        V(\zeta(t)) = \zeta(t)^{\T} P\zeta(t).
    \end{equation*}
Define $\Theta\coloneqq\overline{A}-\underline{LH}-(R\mathcal{L}\otimes I_n)$.
Then,
\begin{multline}
    \label{eq:V-dot}
    \dot V(\zeta(t)) =\ \mu^{1+m}(t) \zeta(t)^{\T} ( \Theta^{\T} {P} + {P} \Theta ) \zeta(t) \\
    \qquad\qquad - (1+m) \frac{{\mu(t)}}{T}\zeta(t)^{\T}( \overline{D}^{\T} {P}+P\overline{D}) \zeta(t) \\
    + 2 \zeta(t)^{\T} {P} \overline{\Gamma}^{-1}(t)\overline B\Delta\Phi(t).
\end{multline}
From Lemma~\ref{lem:key-ineq}, we have
\begin{align*}
        2 \zeta(t)^{\T} {P} \overline{\Gamma}^{-1}(t)\overline B \Delta \Phi(t) &\leq 2 \lambda_{\max}(P) \|\zeta(t)\| \|\overline{\Gamma}^{-1}(t)\overline B \Delta \Phi(t)\| \\
        &\leq 2k_f\lambda_{\max}(P) \| \zeta(t) \|^2
\end{align*}
and using \eqref{eq:LMI-a}, \eqref{eq:LMI-b}, and \eqref{eq:Lsnew}, we can upper bound \eqref{eq:V-dot} as
\begin{multline}
\label{eq:V-dot-bound}
    \dot V(\zeta(t)) \leq \bigg( -\epsilon_1 \mu^{1+m}(t) -\epsilon_2 \frac{(1+m)\mu(t)}{T} \\
    +2k_f \lambda_{\max}(P)\bigg) \|\zeta(t) \|^2.
\end{multline}
Firstly, to establish that $\|\zeta(t)\|$ remains bounded on $[t_0,t_\ast)$, we can further upper bound \eqref{eq:V-dot-bound} as
\[
\dot V(\zeta(t)) \leq 2k_f \lambda_{\max}(P)\|\zeta(t)\|^2.
\]
Since $V(\zeta) > 0$ for all $\zeta \in \R^{nN}  \setminus \{0\}$, by integrating over $[t_0,t)$, for $t_0\leq t < t_\ast$, we obtain
\begin{equation*}
V(\zeta(t)) \leq V(\zeta(t_0))\exp\left(\frac{2k_f \lambda_{\max}(P)(t-t_0)}{\lambda_{\min}(P)}\right).
\end{equation*}

Secondly, we show $\zeta(t)$ satisfies \eqref{eq:KL-FTGUAS} for $t\in[t_\ast, t_0+T)$. From \eqref{eq:LMI-c}, we have $\lambda_{\max}(P) \leq \epsilon_3$. Moreover, \eqref{eq:LMI-d} guarantees that on the interval $[t_\ast,t_0+T)$,
\[
\epsilon_2\frac{(1+m)\mu(t)}{T}\geq 2k_f \epsilon_3 \geq 2k_f \lambda_{\max}(P).
\]
Therefore, for $t\in[t_\ast,t_0+T)$, \eqref{eq:LMISnew} implies
\begin{equation*}
    \dot{V}(\zeta(t)) \leq -\epsilon_1 \mu^{1+m}(t)\|\zeta(t) \|^2.
\end{equation*}
Since $V(\zeta(t))\leq \lambda_{\max}(P) \|\zeta(t)\|^2$,
we have
\begin{equation*}
\frac{\dot{V}(\zeta(t))}{V(\zeta(t))} \leq - \frac{\epsilon_1 \mu^{1+m}(t)}{\lambda_{\max}(P)}, ~\text{for}~ \zeta(t)\neq 0.
\end{equation*}
Integrating over $[t_\ast,t)$, for $t_\ast\leq t\leq t_0+T$, we have
\begin{align*}
    \int_{t_\ast}^{t} \frac{\dot{V}(\zeta(\tau))}{V(\zeta(\tau))} \mathrm{d}\tau \!\leq\!
    \int_{t_\ast}^{t}-\frac{\epsilon_1\mu^{1+m}(\tau) }{\lambda_{\max}(P)}\mathrm{d}\tau 
    \!=\! - \frac{\epsilon_1 T [\mu^m(t)-\mu^{m}_\ast]}{m \lambda_{\max}( P)}
\end{align*}
where we used the fact that
\[
\frac{d}{d\tau}\mu^m(\tau) = \frac{m}{T}\mu^{1+m}(\tau), \quad \mu^{m}(t_\ast)=\mu_\ast^{m}.
\]
Since $\int_{t_\ast}^{t} \frac{\dot{V}(\zeta(\tau))}{V(\zeta(\tau))} \mathrm{d}\tau = \ln\left(\frac{V(\zeta(t))}{V(\zeta(t_\ast))}\right)$, we obtain
\begin{align*}
    V(\zeta(t)) \leq V(\zeta(t_\ast)) \exp \left( - \frac{\epsilon_1 T[\mu^m(t) - \mu^{m}_\ast]}{m \lambda_{\max}( P)}  \right).
\end{align*}
Using $\lambda_{\min}\!(P)\|\zeta(t)\|^2 \!\leq\! V(\zeta (t)\!) \!\leq\! \lambda_{\max}\!(P)\|\zeta(t)\|^2$, we obtain
\begin{equation}
\label{eq:zeta-bound}
\| \zeta(t) \|\! \leq \!\!\sqrt{\!\frac{\lambda_{\max}(\!P)}{\lambda_{\min} (\!P)}}\| \zeta(t_\ast) \|  \exp\!\!\left(\!
     - \frac{\epsilon_1 T[\mu^m(t) - \mu^{m}_\ast]}{2m \lambda_{\max}( P)} \!
    \right).
\end{equation}
The right-hand side of \eqref{eq:zeta-bound} can be written as
\begin{equation}\label{eq:klform}
\beta(r,s(t)) = \Omega r \exp(-\alpha s(t))
\end{equation}
where $r = \|\zeta(t_\ast) \|$, $s(t)=\mu^{m}(t)-\mu^{m}_\ast$, $\Omega = \sqrt{\frac{\lambda_{\max}(P)}{\lambda_{\min}(P)}}$, and $\alpha = \frac{\epsilon_1T}{2m\lambda_{\max}(P)}$.
For each fixed $s(t)$, the function $\beta(r,s(t))$ is strictly increasing in $r$ with $\beta(0,s(t))=0$. For each fixed $r$, $\beta(r,s(t))\to 0$ because $s(t)\to \infty$ and $\exp(-\alpha s(t)) \to 0$ as $t\to t_0+T$. Hence, $\beta$ qualifies as a class $\mathcal{KL}$ function, which by Definition~\ref{def:eFTGUAS} implies that the system \eqref{eq:transformederror} is E-FGUAS. 


Having established the bound for the transformed error $\zeta(t)$, we now consider the original error $e(t)=\overline \Gamma(t) \zeta(t)$, where we have $\| e(t)\|\leq \|\overline{\Gamma}(t)\|\|\zeta(t)\|$.
From \eqref{eq:zeta-bound}, we have
\[
\| e(t) \| \leq \Omega\|\overline{\Gamma}(t) \|\| \overline{\Gamma}^{-1}(t_\ast) \| \| e(t_\ast) \|  \exp\left(
     - \alpha[\mu^m(t) - \mu^{m}_\ast] 
    \right)
\]
where
\[
\|\overline{\Gamma}(t) \|\| \overline{\Gamma}^{-1}(t_\ast) \| = \frac{\mu^p(t)}{\mu_\ast^{1+m}}
\]
with $p=(1+m)\max(n_1,\dots,n_N)$. Therefore,
\[
\| e(t) \| \leq \frac{\Omega \|e(t_\ast) \|}{\mu_\ast^{1+m}}\exp(p\ln\mu(t) -\alpha[\mu^m(t) -\mu_\ast^m]).
\]
Since $\mu^m(t) = s(t) +\mu_\ast^m$, we can rewrite the right-hand side
\[
\beta_e(r_e,s) = \frac{\Omega r_e}{\mu_\ast^{1+m}}(s(t)+\mu^m_\ast)^{\frac{p}{m}}\exp({-\alpha s(t)})
\]
where $r_e = \| e(t_\ast) \|$. Again, for fixed $s(t)$, $\beta_e$ is strictly increasing in $r_e$ with $\beta_e(0, s(t)) = 0$. Moreover, as $t \to t_0 + T$, we have $s(t) \to \infty$ due to $\mu(t) \to \infty$.
Using L'H\^{o}pital's rule, we have that exponential decay dominates polynomial growth:
\[
\lim_{s \to \infty} (s + \mu_*^m)^{\frac{p}{m}} \exp({-\alpha s}) = 0
\]
implying $\beta_e(r_e, s(t)) \to 0$ as $t \to t_0 + T$. 
Hence, $\beta_e$ qualifies as a class $\mathcal{KL}$ function, which implies that the original error system \eqref{eq:errorthm} is E-FGUAS.
\end{proof}

\begin{remark}
The centralized prescribed-time observer in \cite{Adil2024} employs a proof strategy with similarities to our approach. 
However, the proof of \cite[Theorem 1]{Adil2024} contains several mathematical errors, as described in Appendix~\ref{append:mistakes-adil2024}. 
Similarly, a distributed prescribed-time observer in \cite{Chang2021} relies on a wrong lemma, which is also falsified in Appendix~\ref{append:mistakes-chang2021}.
These errors affect the validity of their stated results. 
We provide this observation to clarify the distinctions between our work and the existing literature, and to ensure mathematical rigor in this developing field.
\end{remark}

\section{Numerical Simulations}

To demonstrate the performance of the distributed observer, we consider the example from \cite{Yuanqing22}, which is a four-agent ($N =4$) system of the form \eqref{eq:sys} given by
\begin{align}\label{eq:examplesys}
\begin{aligned}
    \dot x_{11} & = x_{12}, ~~
    \dot x_{12} = -x_{11}+x_{12}(1-x^2_{11})\\
    \dot x_{21} & = -x_{21}-\cos(x_{11})\\
    \dot x_{31} & = x_{32}, ~~
    \dot x_{32} = -2x_{31} + \sin(x_{12})+x_{21}\\
   \dot x_{41}  & = - \frac{x_{41}}{1+x^{2}_{41}}+x_{31}\\
        y^{1}   &= x_{11}, ~ 
        y^{2}   = x_{21},~  
        y^{3}   = x_{31}, ~ 
        y^{4}     = x_{41}
   \end{aligned}
\end{align}
where $n_1 = n_3 = 2$ and $n_2 = n_4 = 1$.
The agents communicate over a strongly connected directed graph illustrated in Fig.~\ref{fig:ex-digraph}, whose adjacency and Laplacian matrices are
\begin{equation*}
\mathcal{A} =
\begin{bmatrix}
0 & 1 & 0 & 1\\
0 & 0 & 1 & 0\\
1 & 1 & 0 & 0\\
1 & 0 & 0 & 0
\end{bmatrix}, \quad
    \mathcal{L} =
    \begin{bmatrix}
        2 & -1 & 0 & -1\\
        0 & 1 & -1 & 0\\
        -1 & -1 & 2 & 0\\
        -1 & 0 & 0 & 1
    \end{bmatrix}.
\end{equation*}
The normalized positive left eigenvector of the Laplacian matrix (see Fact~\ref{lem:stronggraphs}) is computed to be $r= \begin{bmatrix}
    0.8 & 1.6 & 0.8 & 0.8
\end{bmatrix}$.

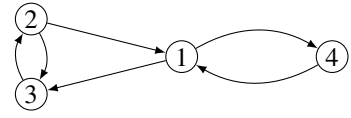
\begin{figure}[!t]
    \centering
    \begin{tikzpicture}
        \vertex (1) at (0,0) {1};
        \vertex (2) at (-2,0.5) {2};
        \vertex (3) at (-2,-0.5) {3};
        \vertex (4) at (2,0) {4};
        \path[-latex]
            (1) edge (3)
            (1) edge[bend left=30] (4)
            (2) edge (1)
            (2) edge[bend left=30] (3) 
            (3) edge[bend left=30] (2)
            (4) edge[bend left=30] (1)
        ;
    \end{tikzpicture}
    \caption{Directed graph for the example.}
    \label{fig:ex-digraph}
\end{figure}

The system described by \eqref{eq:examplesys} is not globally Lipschitz over $\mathbb{R}^n$. 
However, \eqref{eq:examplesys} is smooth and its trajectories are bounded inside a compact set $\mathcal X\subset\mathbb R^n$, i.e., it is forward complete within $\mathcal X$. 
Therefore, as also discussed in \cite{Yuanqing22}, the Lipschitz condition is satisfied over the system's state space $\mathcal{X}$ and our approach is valid for this example. 
The Lipschitz constant of the first subsystem is computed as $\gamma_1 \approx 6$ over $t\in[0,2]$. The remaining subsystems are globally Lipschitz, with $\gamma_2 =\gamma_4 = \sqrt{2}$, and $\gamma_3 = \sqrt{6}$. Therefore, from \eqref{eq:k_f-ineq}, we get $k_f \approx 26.2$. 

For the distributed observer, we set the prescribed convergence time $T=2$ and convergence parameter $m=2$. For $t_0=0$ and $t_\ast=1.98$, we obtain $\mu_{\ast} = 100$ from \eqref{eq:mu}. We find a feasible solution for the LMIs \eqref{eq:LMISnew} using {YALMIP with Sedumi solver}, where we obtain $\epsilon_1=0.0045$, $\epsilon_2=0.3694$, and $\epsilon_3=1.0046$. Other decision variables are obtained as 
\begin{align*}
    P_1 & = 
    \colsep{0.5pt}\begin{bmatrix}
    0.4646 &  -0.2276 &   0       &0         &0         &0\\
    -0.2276 &   0.3882 &   0       &0         &0         &0\\
    0      &   0      &   0.4673  &0         &0         &0\\
    0      &   0      &   0       &0.2126    &-0.0635    &0\\
    0      &   0      &   0       &-0.0635    &0.7717    &0\\
    0      &   0      &   0       &0         &0         &0.4502
    \end{bmatrix}, \\
    P_2 & =
      \colsep{2pt}\begin{bmatrix}
    0.2398   &-0.0573  & 0      &0      &0      &0\\
   -0.0573   & 0.5125  & 0      &0      &0      &0\\
    0        & 0       & 0.3638 &0      &0      &0\\
    0        & 0       & 0      &0.4363 &0.0555 &0\\
    0        & 0       & 0      &0.0555 &0.6759 &0\\
    0        & 0       & 0      &0      &0      &0.4299
      \end{bmatrix}, \\
    P_3 & = 
    \colsep{2pt}\begin{bmatrix}
    0.2843    &0.0246   & 0      &0        &0       &0\\
    0.0246    &0.6024   & 0      &0        &0       &0\\
    0         & 0       & 0.4340 &0        &0       &0\\
    0         & 0       & 0      &0.4660   &-0.2655 &0\\
    0         & 0       & 0      &-0.2655  &0.4462  &0\\
    0         & 0       & 0      &0        &0       &0.4698
    \end{bmatrix}, \\
    P_4 & =
    \colsep{0.5pt}\begin{bmatrix}
    0.3754   &-0.0413    &0       &0        &0        &0\\
   -0.0413   &0.4790     &0       &0        &0        &0\\
    0        &0          &0.5232  &0        &0        &0\\
    0        &0          &0       &0.1995   &-0.0479  &0\\
    0        &0          &0       &-0.0479  &0.9124   &0\\
    0        &0          &0       &0        &0        &0.4172
    \end{bmatrix}, \\
    Q_1 & =
    \begin{bmatrix}
            0.9319  &  1.6939 &  0 & 0 & 0
    \end{bmatrix}, \\
    Q_2 & =
    \colsep{5.75pt}\begin{bmatrix}
            0 & 0 & 0.4940 & 0 & 0 & 0
    \end{bmatrix}, \\
     Q_3 & =
    \colsep{3.5pt}\begin{bmatrix}
            0 & 0 & 0 & 1.5606  &   1.8822 & 0
    \end{bmatrix}, \\
    Q_4 & =
    \colsep{5.75pt}\begin{bmatrix}
            0 & 0 & 0 & 0 & 0 &  0.4922
    \end{bmatrix},
\end{align*}
and the local observer gains \eqref{eq:Lsnew} are
\begin{align*}
    L^{1} &= \begin{bmatrix} 5.8117 & 7.7697 & 0 & 0 & 0 & 0 \end{bmatrix}^\T, \\   
    L^{2} &= \colsep{7.2pt}\begin{bmatrix} 0 & 0 & 1.3578 & 0 & 0 & 0 \end{bmatrix}^\T, \\
    L^{3} &= \begin{bmatrix} 0 & 0 & 0 & 8.7002 & 9.3939 & 0 \end{bmatrix}^\T, \\
    L^{4} &= \colsep{7.2pt}\begin{bmatrix} 0 & 0 & 0 & 0 & 0 & 1.1797 \end{bmatrix}^\T.
\end{align*}
In the simulation, the initial state of the observed system is set as $x(t_0) = \begin{bmatrix}
        1 & 0 & -1 & 2 & 0 & -2
    \end{bmatrix}^\T$
and $z(t_0)=0_{nN}$.

All simulations are done with \texttt{ode15s} and the results are illustrated in Fig.~\ref{fig:allfigs}. The first plot presents the trajectories of the observed system. The next four plots display the trajectories of the local observer errors. It can be seen that the errors converge to zero for the prescribed time $T=2$. Finally, the last plot demonstrates convergence of the observer estimate $z^1_{12}(t)$ for different $m$ values for the unmeasured state $x_{12}(t)$. As $m$ increases, the estimate converges faster.

\begin{remark}[Implementation]
To mitigate numerical challenges arising from the diverging gains \(\Gamma,\mu\) near \(T\), practical implementation strategies proposed by \cite{Holloway19} include: (i) saturating (e.g., when \(\mu_{\max}=10^{10}\)) or disabling the gains once the estimation error reaches an acceptable threshold ; (ii) selecting a slightly larger convergence time \(T_{\delta}=T+\delta\) (\(\delta>0\)) to avoid singularities at \(t=T+t_0\)); and (iii) implementing a reset mechanism. Specifically, if at some time \(t_\text{stop}\) we have \(\mu(t_\text{stop})=\mu_{\max}\), the observer is reinitialized by setting \( t_0\leftarrow t_\text{stop} ~\text{and}~ z(t_0)\leftarrow z(t_\text{stop}).\)
\end{remark}

\begin{figure}[t!]
    \centering
    \includegraphics[width=\linewidth]{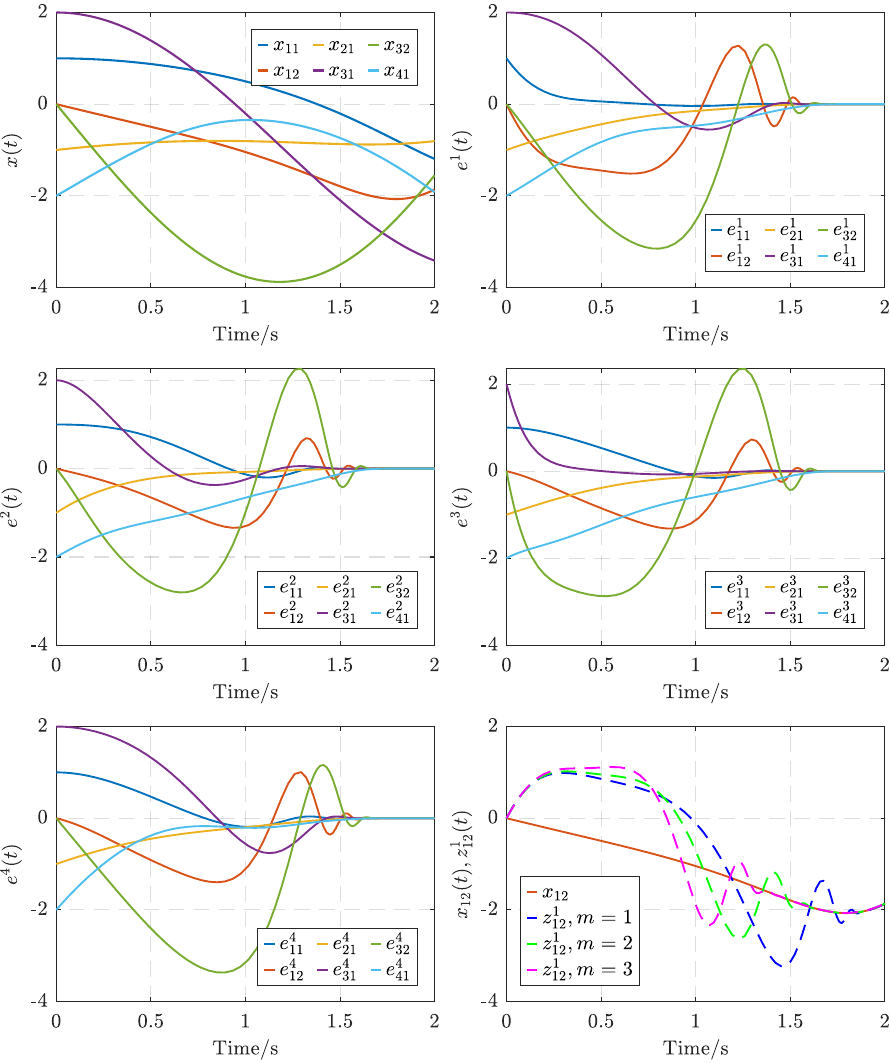}
    \caption{From top left to bottom right: the observed system, the errors of local observers, the estimates $z^{1}_{12}(t)$ for various $m$. 
    }
    \label{fig:allfigs}
\end{figure}

\section{Conclusion}

We presented a distributed prescribed-time observer design for nonlinear systems in block-triangular observable canonical form. 
Each local observer reconstructs the state within a user-specified time, regardless of initial conditions, by communicating over a strongly connected graph and utilizing time-varying, monotonically increasing gains.
The significance of this work lies in establishing the first theoretically sound framework for distributed prescribed-time observers for a class of nonlinear systems. Unlike fixed-time approaches where convergence time depends on observer parameters, our method allows the user to arbitrarily choose the desired convergence time for the distributed observer.

Several limitations remain to be addressed in future research. The current framework does not account for measurement noise, model uncertainties, communication delays, or time-varying network topologies. Additionally, exploring extensions to more general nonlinear system classes and investigating the effect of time-varying gains under noise measurements will be addressed next. Future work will also focus on robust versions that maintain performance guarantees under noisy and uncertain conditions.

\bibliographystyle{IEEEtran}
\bibliography{bibliography}

\appendix

\subsection{Proof of Lemma~\ref{lem:key-ineq}}
\label{append:tech-lemmas}
    
Note that $\Delta \Phi(t) = \col(\Delta \varphi^{1}(t),\ldots,\Delta \varphi^{N}(t)) =\col(\varphi(x)-\varphi(x-\Gamma\zeta^{1}),\ldots,\varphi(x)-\varphi(x-\Gamma\zeta^{N}))$.
Using \cite[Lemma 1]{Alessandri2013} and \cite[Eq. (23)]{Zemouche2019}, we have
\begin{equation}\label{eq:kfineqN1}
    \| \Gamma^{-1}_i(t) B_i \Delta \varphi^j_i(t)\| \leq n_i\gamma_i \|\zeta^j_i(t)\|
\end{equation}
where 
$\Delta\varphi^j_i(t) \coloneqq \varphi_i(\bar x_i(t)) - \varphi_i(\bar x_i(t) - \bar\Gamma_i(t) \bar\zeta^j_i(t))$
with $\bar\Gamma_i(t) = \diag(\Gamma_1(t),\dots,\Gamma_i(t))$, $\bar x_i(t)$ in \eqref{eq:Lipschitz}, and $\bar\zeta^j_i(t) = \col(\zeta^j_1(t),\dots,\zeta^j_i(t))$.
Further, note that $\overline{\Gamma}^{-1}(t) \overline B \Delta\Phi(t) =\col(v_1(t),\dots,v_N(t))$,
where
\[
v_j(t) = \col\left(\frac{B_1\Delta \varphi^{j}_1(t)}{\mu^{n_1(1+m)}(t)}, \dots, \frac{B_N\Delta \varphi^{j}_N(t)}{\mu^{n_N(1+m)}(t)}\right).
\]
Then, using \eqref{eq:kfineqN1}, it follows that
\begin{multline*}
    \left\|\overline{\Gamma}^{-1}(t)\overline{B}\Delta\Phi(t)\right\| = 
    \sqrt{\sum_{j=1}^N \sum_{i=1}^N \|\Gamma_i^{-1}(t)B_i\Delta\varphi_i^j(t)\|^2} \\
    \leq \sqrt{\sum_{j=1}^N \sum_{i=1}^N (n_i\gamma_i)^2 \|\zeta_i^j(t)\|^2}
    \leq \sqrt{N \sum_{i=1}^N  (n_i\gamma_i)^2}\cdot\|\zeta(t) \|.
\end{multline*}

\subsection{Analysis of Mathematical Inconsistencies in \cite{Adil2024}}
\label{append:mistakes-adil2024}

In this appendix, we provide a detailed analysis of several mathematical inconsistencies in \cite[Theorem 1 and 2]{Adil2024}. For clarity, we maintain the notation used in the original paper.

\subsubsection{Incorrect Application of Lipschitz Property}

Given the Lipschitz-like property of $f:\mathbb{R}^n\to\mathbb{R}$ stated in \cite{Adil2024}:
\begin{equation}
    \label{eq:lipschitz-ref14}
    |f(x_1,\dots,x_n) - f(\hat{x}_1,\dots,\hat{x}_n)| \leq \gamma_f \sum_{i=1}^n |x_i - \hat{x}_i|
\end{equation}
In the proof of \cite[Theorem 1]{Adil2024}, following Eq. (38), the authors claim:
\[
2\bar{e}^T(t) P \Gamma(t) B [f(x) - f(\hat{x})] \leq 2\bar{e}^T(t) P \Gamma(t) B [\gamma_f (x(t)-\hat{x}(t))]
\]
This expression contains two mathematical errors:
\begin{enumerate}
    \item The Lipschitz property in \eqref{eq:lipschitz-ref14} applies to the norm of the difference, not to the vector difference directly.
    
    \item The expression $B[\gamma_f(x(t)-\hat{x}(t))]$ is dimensionally inconsistent, as $B \in \mathbb{R}^{n \times 1}$ and $(x(t)-\hat{x}(t)) \in \mathbb{R}^{n \times 1}$, making their multiplication invalid.
\end{enumerate}
A mathematically valid approach to bounding the cross-term $2\bar{e}^T(t) P \Gamma(t) B [f(x) - f(\hat{x})]$ is available in \cite{Alessandri2013}. However, applying this correct approach would invalidate the results claimed in \cite{Adil2024}. It is often impossible to derive an observer design criterion for nonlinear systems that is independent of the Lipschitz constant of the nonlinearity. The criterion in \cite[Theorem 1]{Adil2024} appears to achieve this through incorrect mathematical derivations.

\subsubsection{Circular Reasoning and Inconsistent Parameter Requirements}

The proof in \cite[Theorem 1]{Adil2024} exhibits further mathematical inconsistencies:

\begin{enumerate}
    \item The authors select $t_e$ to satisfy Eq. (41), which depends on design parameters $P$, $K$, and $\bar{K}$ that have not yet been determined. This introduces circular reasoning in the proof.
    
    \item Eq. (41) constitutes an additional design requirement beyond conditions (33) and (34) stated in \cite[Theorem 1]{Adil2024}, yet it is not included in the theorem statement.
    
    \item The authors claim that a constant $\rho$ can be selected to satisfy Eq. (41), but $\rho$ is already constrained by Eq. (21) in \cite[Lemma 3]{Adil2024}, creating another conflict.
\end{enumerate}

These inconsistencies indicate that the proof structure relies on circular reasoning that does not support the claimed results.

\subsubsection{Incomplete Proof of Prescribed-Time Stability}

According to the definition in \cite{Adil2024}, a system $\dot{x} = f(t,x_{[t-\tau,t]})$ exhibits prescribed-time stability if, for all $t \in [0, T+\tau)$, the state $x(t)$ satisfies:
\begin{equation}
    \label{eq-pt-stability-ref14}
    |x(t)| \leq \beta(|x_0(t)|,\mu(t-\tau)-1)
\end{equation}
where $\beta$ is a class $\mathcal{KL}$ function and $\mu(t-\tau)$ increases to $\infty$ as $t \to T+\tau$.

However, the proof only establishes this bound for the interval $(t_e,T+\tau)$ rather than the entire $[0,T+\tau)$, thus failing to satisfy the definition of prescribed-time stability.

\subsubsection{Missing Design Criterion}
In the proof of \cite[Theorem 1]{Adil2024}, after Eq. (40), the authors introduce:
\[
\ell = \frac{\lambda_1}{\lambda_{\max}(P)} - \tau \left(2 + \frac{1+m}{T} \right)
\]
and assert that $\ell > 0$. This constitutes an additional constraint on the maximum eigenvalue of $P$, which must be included as a design criterion but is omitted from the theorem statement.

\subsubsection{Invalidation of Theorem 2}
Since \cite[Theorem 2]{Adil2024} relies on ``similar lines of argument" as the proof of \cite[Theorem 1]{Adil2024}, the mathematical errors identified above also invalidate the claims of \cite[Theorem 2]{Adil2024}.

\subsection{Analysis of Mathematical Inconsistencies in \cite{Chang2021}}
\label{append:mistakes-chang2021}
In this appendix, we contend that the proof of \cite[Lemma 5]{Chang2021} contains a fundamental mathematical error. Since this lemma is essential to their main result \cite[Theorem 1]{Chang2021}, this error invalidates \cite[Theorem 1]{Chang2021} as well.

In \cite[Appendix C]{Chang2021}, the authors decompose the state matrix $A$ as $A=\diag(A_{1i},A_{2i},A_{3i})$, where
\begin{align*}
    A_{1i} &= \begin{pmatrix}
        0_{(p_i-1)\times 1} & \tilde{A}_{1i} \\ 0 & 0_{1\times (p_i-1)}
    \end{pmatrix} \\
    A_{2i} &= \begin{pmatrix}
        0_{(\tau_i-1)\times 1} & \tilde{A}_{2i} \\ 0 & 0_{1\times(\tau_i-1)}
    \end{pmatrix} \\
    A_{3i} &= \begin{pmatrix}
        0_{(n-p_i-\tau_i+1)\times 1} & \tilde{A}_{3i} \\ 0 & 0_{1\times (n-p_i-\tau_i+1)}
    \end{pmatrix}
\end{align*}
with 
\begin{align*}
    \tilde{A}_{1i} &= \diag(\delta_1,\dots,\delta_{p_i-1}) \\
    \tilde{A}_{2i} &= \diag(\delta_{p_i},\dots,\delta_{p_i+\tau_i-2}) \\
    \tilde{A}_{3i} &= \diag(\delta_{p_i+\tau_i-1},\dots,\delta_{n-1}).
\end{align*}
However, this decomposition presents a dimensional inconsistency. Specifically:
\begin{align*}
&\dim(A_{1i}) + \dim(A_{2i}) + \dim(A_{3i}) \\ &\qquad = p_i + \tau_i + (n-p_i-\tau_i+2) \\
& \qquad = n+2 > \dim(A)=n
\end{align*}
where $\dim(\cdot)$ denotes the row dimension of a square matrix.

Even if we were to assume this decomposition is correct (perhaps due to a typographical error in \cite{Chang2021}), the pair $(A,C_i)$ would be unobservable. The block diagonal structure of $A$ results in a disconnected graph representation as shown in Figure~\ref{fig:graph}.

\begin{figure}
    \centering
    \includegraphics[width=0.45\textwidth]{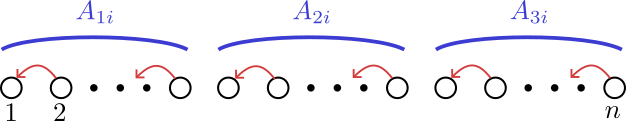}
    \caption{Graph representation of the system structure considered in \cite{Chang2021}.}
    \label{fig:graph}
\end{figure}

Due to this structure, there exist no edges connecting nodes from $A_{3i}$ to $A_{2i}$ or from $A_{2i}$ to $A_{1i}$. Furthermore, given that 
\begin{align}
C_i = \begin{pmatrix} 0_{1\times(p_i-1)}, \tilde{C}_i, 0_{1\times(n-p_i-\tau_i+1)} \end{pmatrix}
\end{align}
where $\tilde{C}_i = (1,0,\dots,0)\in\mathbb{R}^{1\times \tau_i}$, the observable subspace would consist solely of modes corresponding to $A_{2i}$, while all modes related to $A_{1i}$ and $A_{3i}$ would be unobservable. Moreover, the pair $(A,C_i)$ is not detectable since all eigenvalues of $A$ equal 0 (due to its upper triangular structure), meaning the unobservable modes are unstable. Consequently, the LMI in \cite[Lemma 5]{Chang2021} cannot be satisfied.

The correct decomposition of $A$ should be:
\begin{align*}
    A_{1i} &= \begin{pmatrix}
        0_{(p_i-1)\times 1} & \tilde{A}_{1i}
    \end{pmatrix}, \\
    A_{2i} &= \tilde{A}_{2i}, \\
    A_{3i} &= \begin{pmatrix}
        \tilde{A}_{3i} \\ 0_{1\times (n-p_i-\tau_i+1)}
    \end{pmatrix}
\end{align*}
Here, note that $A_{1i}$ and $A_{3i}$ are not square matrices.

Based on the observability of the pair $(A_{2i},\tilde{C}_i)$, one can verify that there exists a positive definite matrix $\tilde{P}_{\tau_i}$, a vector $L_{\tau_i}\in\mathbb{R}^{\tau_i}$, and scalars $0<\tilde{\alpha} < \tilde{\beta}$ such that the LMIs in \cite[Appendix C]{Chang2021}:
\begin{align*}
    \tilde{P}_{\tau_i}(A_{2i} + \tilde{L}_{\tau_i}\tilde{C}_i) + (A_{2i} + \tilde{L}_{\tau_i}\tilde{C}_i)^T \tilde{P}_{\tau_i} &\leq A_{2i} + A_{2i}^T - \gamma I_{\tau_i}  \\
    \tilde{\alpha} I_{\tau_i} \leq \tilde{P}_{\tau_i} H_i + H_i \tilde{P}_{\tau_i} &\leq \tilde{\beta} I_{\tau_i}
\end{align*}
are feasible. This part of their analysis is mathematically sound.

However, the authors then incorrectly extrapolate this result to the full system by choosing 
\begin{align*}
    P_i&=\diag(I_{p_i-1}, \tilde{P}_{\tau_i}, I_{n+1-\tau_i-p_i}) \\
    \bar{L}_i &= (0_{1\times(p_i-1)}, \tilde{L}_{\tau_i}^T, 0_{1\times(n+1-\tau_i-p_i)})
\end{align*}
and claiming that with appropriate selection of $0<\alpha<\beta$, the inequalities
\begin{align}
    & P_i (A+\bar{L}_i C_i) + (A+\bar{L}_i C_i)^T P_i \leq A + A^T - \gamma M_{\tau_i} \label{12-lmi-theorem} \\
    & \alpha I_n \leq P_i H + H P_i \leq \beta I_n \label{13-lmi-theorem}
\end{align}
are satisfied. This conclusion is mathematically invalid. The extrapolation would be valid if and only if $A_{1i}$, $A_{2i}$, and $A_{3i}$ were all square matrices, which would allow the matrix multiplications on the left-hand side of \eqref{12-lmi-theorem} to decouple. Since $A_{1i}$ and $A_{3i}$ are not square matrices, the relation $P_i A \neq \diag(A_{1i},\tilde{P}_{\tau_i} A_{2i},A_{3i})$ does not hold, contrary to what is required for their proof. Therefore, inequalities \eqref{12-lmi-theorem} and \eqref{13-lmi-theorem} cannot be established as claimed.

\end{document}